\documentclass[superscriptaddress,aps,pra,twocolumn,showpacs,nofootinbib,longbibliography]{revtex4-1}
\usepackage{etex}
\usepackage{amsmath,amssymb,amsthm}
\usepackage[colorlinks=true,citecolor=blue,urlcolor=blue]{hyperref}
\usepackage[pdftex]{graphicx}
\usepackage{times,txfonts}
\usepackage{braket}
\usepackage{color}
\usepackage{natbib}
\usepackage{amsmath,blkarray}
\usepackage{mathtools}
\usepackage{latexsym}
\usepackage{tabularx, booktabs}
\usepackage{graphics,epstopdf}
\usepackage{graphicx}
\usepackage{float}
\usepackage{graphicx}
\usepackage{amsfonts}
\usepackage{subcaption}

\newcommand{\be}{\begin{equation}}
\newcommand{\ee}{\end{equation}}
\newcommand{\ba}{\begin{eqnarray}}
\newcommand{\ea}{\end{eqnarray}}

\newcommand{\tr}{\operatorname{Tr}}

\newtheorem{observation}{Observation}

\newtheorem{thm}{Theorem}

\begin{document}

\title{Detecting Einstein-Podolsky-Rosen steering through entanglement detection}

\author{Debarshi Das}
\email{dasdebarshi90@gmail.com}
\affiliation{Centre for Astroparticle Physics and Space Science (CAPSS),
Bose Institute, Block EN, Sector V, Salt Lake, Kolkata 700 091, India}

\author{Souradeep Sasmal}
\email{souradeep@mail.jcbose.ac.in}
\affiliation{Centre for Astroparticle Physics and Space Science (CAPSS),
Bose Institute, Block EN, Sector V, Salt Lake, Kolkata 700 091, India}

\author{Sovik Roy}
\email{sovik1891@gmail.com}
\affiliation{Department of Mathematics, EM 4/1, Techno India Salt Lake (TISL), Sector V, Kolkata 700 091, India}

\begin{abstract}
Quantum inseparabilities can be classified into three inequivalent forms: entanglement, Einstein-Podolsky-Rosen (EPR) steering, and Bell's nonlocality. Bell-nonlocal states form a strict subset of EPR steerable states which also form a strict subset of entangled states. Recently, EPR steerable states are shown to be fundamental resources for one-sided device-independent quantum information processing tasks and, hence, identification of EPR steerable states becomes important from foundational as well as informational theoretic perspectives. In the present study we propose a new criteria to detect whether a given two-qubit state is EPR steerable. From an arbitrary given two-qubit state, another two-qubit state is constructed in such a way that the given state is EPR steerable if the new constructed state is entangled. Hence, EPR steerability of an arbitrary two-qubit state can be detected by detecting entanglement of the newly constructed state. Apart from providing a distinctive way to detect EPR steering without using any steering inequality, the novel finding in the present study paves a new direction to avoid locality loophole in EPR steering tests and to reduce the ``complexity cost" present in experimentally detecting EPR steering.  We also generalise our criteria to detect EPR steering of higher dimensional quantum states. Finally, we illustrate our result by using our proposed technique to detect EPR steerability of various families of mixed states.
\end{abstract}

\maketitle
\section{Introduction}

Einstein-Podolsky-Rosen (EPR) steering is defined as the apparent ability to affect a spatially separated  quantum state, which was the central problem in the EPR argument \cite{epr} to demonstrate the incompleteness of quantum mechanics. In particular, EPR argument considers an entangled state shared between two spatially separated parties and it implies the possibility to produce different set of states at one party's end by performing local quantum measurements of any two non-commuting observables on another spatially separated party's end. This ``Spooky action at a distance" motivated Schr\"{o}dinger to conceive the celebrated concept of `EPR steering'  \cite{scro}. However, the research field of quantum steering did not progress much until 2007, when Wiseman, Jones, and Doherty (WJD) introduced the concept of EPR steering in the form of a task \cite{steer,steer2}. The task of quantum steering is that a referee has to determine (using the measurement outcomes communicated classically from the two parties to the referee) whether two spatially separated parties share entanglement, when one of the two parties is untrusted. WJD  introduced the notion of EPR steering as the inability to construct a local hidden variable-local hidden state (LHV-LHS) model to explain the joint probabilities of measurement outcomes. Note that in EPR steering scenario the no-signalling condition (the probability of obtaining one party's outcome does not depend on spatially separated other party's measurement setting) is always satisfied by the spatially separated two parties.

It is well-known that EPR steering is an intermediate form of quantum inseparabilities in between entanglement \cite{ent} and Bell nonlocality \cite{Bell,chsh,bell2}. Quantum states that demonstrate Bell-nonlocality form a strict subset of quantum states demonstrating EPR steering which also form a strict subset of entangled states \cite{inequi}. One important point to be stressed here is that EPR steering is inherently asymmetric with respect to the observers  unlike quantum nonlocality and entanglement \cite{one_way_steer}. In this case, the outcome statistics of one subsystem (which is being `steered') is produced due to quantum measurements on a quantum state. However, there is no such constraint for the other subsystem. In fact it is shown that there exist entangled states which are one-way steerable, i.e., demonstrate steerability from one observer to the other spatially separated observer, but not vice-versa \cite{one_way_steer,one2,one3}. Apart from having foundational significance, EPR steering has a vast range of information theoretic application in one-sided device-independent scenario where the party, which is being steered, has trust on his/her quantum device but the other party's device is untrusted. These applications range from one-sided device-independent quantum key distribution \cite{app1}, advantages in sub-channel discrimination \cite{app2}, secure quantum teleportation \cite{app3,app8}, quantum communication \cite{app8},  detecting bound entanglement \cite{app10}, one-sided device-independent randomness generation \cite{app9}, one-sided device-independent self-testing of pure maximally as well as non-maximally entangled states \cite{app14}.

Against this above backdrop, from fundamental viewpoint as well as from information theoretic perspective it is important to detect EPR steerable states. A number of criteria to detect EPR steering have been proposed till date \cite{in1,in4,in5,in6,in10,in8,nin18,nin10,in13,in17,nin17}. In the present study we provide a completely new criteria to detect EPR steering of an arbitrary two-qubit state. Given an arbitrary two-qubit state, a new two-qubit state is constructed in such a way that EPR-steering of the given state is detected if the new constructed state is entangled. Hence, following Peres-Horodecki criteria \cite{ph1,ph2} we can state that the given state is EPR-steerable if the partial transpose of the new constructed state has at least one negative eigenvalue.  We present a brief idea on how our result can be implemented in experiment. A possible extension of our result to the case of higher dimensional quantum systems has also been demonstrated.

The present study indicates a connection between EPR steering and entanglement. The novelty of the result obtained in the present study is that it presents a method to detect EPR steering without using any steering inequality. From experimental point of view, the present study enables one to indirectly test EPR steering of an arbitrary two-qubit state through entanglement witness \cite{ew1,ew2,ew3,ew4,ew5,ew6} of the new constructed state. Hence, our proposed theorem enables to reduce the ``complexity cost" \cite{cc} (it quantifies how complex an experiment is in order to determine entanglement, EPR steering, Bell nonlocality) in experimentally determining EPR steering as it has been shown that the ``complexity cost" for the least complex demonstration of entanglement is less than the ``complexity cost" for the least complex demonstration of EPR steering \cite{cc}. In particular, new nonlocality tests were constructed to demonstrate the complexity costs of entanglement and EPR steering. These inequalities are the simplest possible witnesses for the above two types of quantum inseparabilities. On the other hand, Bell-CHSH (Bell-Clauser-Horne-Shimony-Holt) inequality is the simplest possible witness for Bell nonlocality.  The above complexity costs have also been demonstrated experimentally by showing the violations of these new inequalities and Bell-CHSH inequality using photonic singlet states \cite{cc}. 

The present study may be helpful to avoid the locality loophole present in EPR steering test. Because the degree of correlation required for entanglement testing is smaller than that for violation of a steering inequality, it should be correspondingly easier to demonstrate entanglement without making the fair-sampling assumption \cite{sbl}. Our proposed procedure to test EPR steering through entanglement detection makes experimental demonstration of EPR steering easier since demonstrating EPR steering is strictly harder than demonstrating entanglement as mentioned in Ref. \cite{sbl}. One important point to be stressed here is that Bell nonlocality can be indirectly detected by detecting EPR steering \cite{pati}. The present study, therefore, completes demonstrating the connections between three inequivalent forms of quantum inseparabilities.

We organize this paper in the following way. We briefly discuss the concept of EPR steering and entanglement in Section \ref{sec2}. In Section \ref{sec3}, we present the main result of this paper on detecting EPR steering of an arbitrary two-qubit state indirectly through entanglement detection. Whether this result can be extended to higher dimensional system is also discussed in Section \ref{sec3}. We illustrate our result by detecting EPR steerability  of  various classes of two-qubit mixed states  and qubit-qutrit mixed states using our proposed technique in Section \ref{sec4}. Finally, in Section \ref{sec5} we summarize the results obtained and present the concluding remarks.

\section{Preliminaries}\label{sec2}
 Suppose $A \in \mathbb{F}_{\alpha} $ and $B \in \mathbb{F}_{\beta}$ are the possible choices of measurements for two spatially separated observers, say Alice and Bob, with outcomes $a \in \mathbb{G}_{a}$ and $b \in \mathbb{G}_{b}$, respectively. Let the state $\rho_{AB}$ is shared between Alice and Bob. After Alice performs arbitrary measurement $A$ with measurement operators $M^A_{a}$ ($M^A_a \geq 0$ $\forall \ A, a$ and $\sum_a M^A_{a} = \mathbb{I}$ $\forall \ A$) corresponding to the outcome $a$, Bob's (unnormalized) conditional state becomes
\begin{equation}
\sigma^A_a = \tr_A[(M^A_{a} \otimes \mathbb{I}) \ \rho_{AB}],
\end{equation}
where $\mathbb{I}$ is the $2 \times 2$ identity matrix. On this unnormalized conditional state Bob performs measurement $B$ with measurement operators $M^B_{b}$ corresponding to the outcome $b$ ($M^B_b \geq 0$ $\forall \ \ B, b$ and $\sum_b M^B_{b} = \mathbb{I}$ $\forall \ B$) to produce the joint probability distribution $P(a, b|A, B, \rho_{AB}) = \tr[M^B_{b} \sigma^A_a]$, where $P(a, b|A, B, \rho_{AB})$ denotes the joint probability of obtaining the outcomes $a$ and $b$, when measurements $A$ and $B$ are performed by Alice and Bob locally on state $\rho_{AB}$, respectively.

The bipartite state $\rho_{AB}$ of the system is steerable from Alice to Bob if and only if it is not the case that for all  $A \in \mathbb{F}_{\alpha} $, $B \in \mathbb{F}_{\beta}$, $a \in \mathbb{G}_{a}$, $b \in \mathbb{G}_{b}$, the joint probability distribution can be written in the form
\begin{equation}
P(a, b|A, B, \rho_{AB}) = \sum_{\lambda} P(\lambda ) \ P(a|A,\lambda) \ P_Q(b|B,\rho_{\lambda}),
\label{lhv-lhs}
\end{equation}
where $P(\lambda)$ is the probability distribution over the local hidden variables (LHV) $\lambda$ with $\sum_{\lambda} P(\lambda ) = 1$. $P(a|A,\lambda)$ denotes an arbitrary probability distribution and $P_Q(b|B,\rho_{\lambda})(=\tr [\rho_{\lambda} M^B_b])$ denotes the quantum probability of outcome $b$ given measurement $B$ on the local hidden state (LHS) $\rho_{\lambda}$; $M^B_b$ being the measurement operator of the observable $B$ associated with outcome $b$. In other words, the bipartite state $\rho_{AB}$ is steerable from Alice to Bob if and only if it does not have LHV-LHS model description (\ref{lhv-lhs}) for arbitrary measurements performed by Alice and Bob.

A bipartite state $\rho_{AB}$ is called separable if and only if the state can be written in the following form
\begin{equation}
\rho_{AB} = \sum_{\lambda} P(\lambda ) \ \rho^A_{\lambda} \otimes \rho^B_{\lambda},
\end{equation}
where $\sum_{\lambda} P(\lambda ) = 1$. A bipartite state, which is not separable, is called entangled. Alternatively, the bipartite state $\rho_{AB}$ of the system is entangled if and only if it is not the case that for all  $A \in \mathbb{F}_{\alpha} $, $B \in \mathbb{F}_{\beta}$, $a \in \mathbb{G}_{a}$, $b \in \mathbb{G}_{b}$, the joint probability distribution can be written in the form
\begin{equation}
P(a, b|A, B, \rho_{AB}) = \sum_{\lambda} P(\lambda ) \ P_Q(a|A,\rho^A_{\lambda}) \ P_Q(b|B,\rho^B_{\lambda}),
\label{lhs-lhs}
\end{equation}
where $\sum_{\lambda} P(\lambda ) = 1$. $P_Q(a|A,\rho^A_{\lambda})(=$ tr$[\rho^A_{\lambda} M^A_a])$ denotes the quantum probability of outcome $a$ given measurement $A$ on the local hidden state $\rho^A_{\lambda}$; $M^A_a$ being the measurement operator of the observable $A$ associated with outcome $a$. $P_Q(b|B,\rho^B_{\lambda})$ is similarly defined. In other words, the bipartite state $\rho_{AB}$ is entangled if and only if it does not have LHS-LHS model description (\ref{lhs-lhs}) for arbitrary measurements performed by Alice and Bob.

\section{Detecting EPR steering through Entanglement detection}\label{sec3}

The main result of this paper is stated in the following theorem.

\begin{thm}
For any two-qubit state $\rho_{AB}$ shared between Alice and Bob, define two new states $\tau^1_{AB}$ and $\tau^2_{AB}$ given by,
\begin{equation}
\tau^1_{AB} = \mu_1 \ \rho_{AB} + (1-\mu_1) \ \tilde{\rho^1}_{AB},
\end{equation}
and 
\begin{equation}
\tau^2_{AB} = \mu_2 \ \rho_{AB} + (1-\mu_2) \ \tilde{\rho^2}_{AB},
\end{equation}
where $\tilde{\rho^1}_{AB} = \rho_A \otimes \frac{\mathbb{I}}{2}$ with $\rho_A = \tr_{B}[ \rho_{AB} ] = \tr_{B}[ \tau^1_{AB} ]$ being the reduced state at Alice's side; $\tilde{\rho^2}_{AB} = \frac{\mathbb{I}}{2} \otimes \rho_B$ with $\rho_B = \tr_{A}[ \rho_{AB} ] = \tr_{A}[ \tau^2_{AB} ]$ being the reduced state at Bob's side; $\mu_1 \in [0, \frac{1}{\sqrt{3}}]$; $\mu_2 \in [0, \frac{1}{\sqrt{3}}]$. If $\tau^1_{AB}$ is entangled, then $\rho_{AB}$ is EPR steerable from Bob to Alice. On the other hand, if $\tau^2_{AB}$ is entangled, then $\rho_{AB}$ is EPR steerable from Alice to Bob.
\label{th1}
\end{thm}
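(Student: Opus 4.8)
The plan is to prove the logically equivalent contrapositive: if $\rho_{AB}$ is \emph{not} EPR steerable from Bob to Alice, then the constructed state $\tau^1_{AB}$ is separable (and symmetrically for $\tau^2_{AB}$, which follows by interchanging the two parties). The first step is to recognize what the construction does operationally. Writing $\rho_{AB}=\frac14\big(\mathbb I\otimes\mathbb I+\vec a\cdot\vec\sigma\otimes\mathbb I+\mathbb I\otimes\vec b\cdot\vec\sigma+\sum_{ij}T_{ij}\,\sigma_i\otimes\sigma_j\big)$ in the Pauli basis, a direct computation shows that $\tau^1_{AB}=\mu_1\rho_{AB}+(1-\mu_1)\,\rho_A\otimes\frac{\mathbb I}{2}$ is precisely $\rho_{AB}$ with Bob's qubit sent through a depolarizing channel of strength $\mu_1$: its Bloch data become $(\vec a,\ \mu_1\vec b,\ \mu_1 T)$, so Alice's marginal is untouched while Bob's marginal and the full correlation matrix are uniformly shrunk by $\mu_1$. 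This is the natural pairing, since we add noise on Bob's side and Bob is the untrusted (steering) party in ``Bob to Alice'' steering.

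Next I would invoke the hypothesis through the assemblage picture. Unsteerability from Bob to Alice means that the assemblage $\{\sigma_{b|B}=\tr_B[(\mathbb I\otimes M^B_b)\rho_{AB}]\}$ produced on Alice's side by Bob's measurements admits a single local-hidden-state ensemble $\{p(\lambda),\rho_\lambda\}$ of states on Alice with response functions $P(b|B,\lambda)$, so that $\sigma_{b|B}=\sum_\lambda p(\lambda)P(b|B,\lambda)\rho_\lambda$ for all $B,b$; in particular $\rho_A=\sum_\lambda p(\lambda)\rho_\lambda$. Because the projective measurements $P^{\hat n}_\pm=\frac12(\mathbb I\pm\hat n\cdot\vec\sigma)$ over all directions $\hat n$ are tomographically complete on Bob's qubit, a separable state is uniquely fixed by the assemblage it generates under these measurements. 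Hence it suffices to exhibit an ensemble of \emph{product} states reproducing the assemblage of $\tau^1_{AB}$, namely $\Sigma_{\hat n,\pm}=\mu_1\sigma_{\hat n,\pm}+(1-\mu_1)\tfrac12\rho_A$; if legitimate density operators $\omega_\lambda$ can be found for Bob, then $\tau^1_{AB}=\sum_\lambda p(\lambda)\,\rho_\lambda\otimes\omega_\lambda$ is manifestly separable.

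The technical core is to build Bob's states $\omega_\lambda$ and to control their positivity. Reusing the hidden-state ensemble $\{p(\lambda),\rho_\lambda\}$ on Alice, I would seek $\omega_\lambda=\frac12(\mathbb I+\vec w_\lambda\cdot\vec\sigma)$ whose statistics match the target assemblage after integrating over $\lambda$. The obstruction is that the LHS response functions are generically nonlinear in $\hat n$ (for qubits they may be taken as deterministic sign responses $\mathrm{sign}(\hat n\cdot\hat\lambda)$), whereas a fixed Bloch vector only reproduces the \emph{linear} part $\hat n\cdot\vec w_\lambda$ of the response. The point is that the depolarizing factor $\mu_1$ rescales the entire response, and one only needs the reconstructed $\vec w_\lambda$ to have norm at most one after the $\lambda$-average that produces $\tau^1_{AB}$. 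Carrying out this spherical average and bounding $\|\vec w_\lambda\|$ uniformly over \emph{all} unsteerable $\rho_{AB}$ is exactly the step that fixes the admissible noise range; I expect the extremal case to be an anisotropic (``needle''-type) correlation matrix rather than the isotropic one, and the resulting threshold to be $\mu_1\le 1/\sqrt3$.

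I anticipate the positivity analysis to be the main obstacle. The hypothesis supplies a single LHS ensemble reproducing Alice's assemblage for every projective Bob measurement simultaneously, so the tomographic-completeness step is clean; the real work is proving that the reconstructed Bob states remain positive for every unsteerable input, and that this is guaranteed for $\mu_1\le 1/\sqrt3$, which is what pins down the stated range rather than an ad hoc choice. As a consistency check I would specialize to Werner states $\rho_W(p)$, for which $\tau^1_{AB}=\rho_W(\mu_1 p)$ is separable iff $\mu_1 p\le 1/3$; at the extreme $\mu_1=1/\sqrt3$ this detects steering only for $p>1/\sqrt3>1/2$, confirming that the criterion is sound but deliberately non-tight, exactly as a universal bound over all states should be. The statement for $\tau^2_{AB}$ then follows verbatim after swapping the two subsystems.
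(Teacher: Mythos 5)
Your skeleton is the same as the paper's proof: pass to the contrapositive, keep Alice's LHS ensemble $\{p(\lambda),\rho_\lambda\}$ supplied by the unsteerability hypothesis, and pair each $\rho_\lambda$ with a Bob qubit $\omega_\lambda=\frac12(\mathbb{I}+\vec w_\lambda\cdot\vec\sigma)$ whose Bloch vector is the response-function data at the three Pauli measurements rescaled by $\mu_1$, so that $\tau^1_{AB}=\sum_\lambda p(\lambda)\,\rho_\lambda\otimes\omega_\lambda$. However, the two places where you defer the work are precisely where your stated plan goes wrong. On positivity: you claim one only needs $\|\vec w_\lambda\|\le 1$ ``after the $\lambda$-average'' and anticipate a hard uniform bound over all unsteerable inputs, obtained via a spherical average whose extremal case is an anisotropic correlation matrix. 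Separability requires each $\omega_\lambda$ to be a valid state \emph{individually}, for every $\lambda$, and this is elementary and pointwise: taking $\vec w_\lambda=\mu_1\bigl(2P(+|\hat x,\lambda)-1,\;2P(+|\hat y,\lambda)-1,\;2P(+|\hat z,\lambda)-1\bigr)$, each entry has modulus at most $\mu_1$ because $P(+|\hat n,\lambda)\in[0,1]$, so $\|\vec w_\lambda\|\le\sqrt{3}\,\mu_1\le 1$ exactly when $\mu_1\le 1/\sqrt{3}$. The extremal case is a deterministic response $(\pm1,\pm1,\pm1)$, a corner of the cube; it has nothing to do with the correlation matrix of $\rho_{AB}$, and no optimization over unsteerable states occurs anywhere in the argument.

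Your ``nonlinearity obstruction'' is likewise a red herring, and the resolution is not tomographic completeness of the assemblage plus averaging. You do not need, and in general cannot have, $\tr[\pi^{\hat n}_+\omega_\lambda]$ equal to $\mu_1 P(+|\hat n,\lambda)+(1-\mu_1)/2$ pointwise in $\lambda$ for every direction $\hat n$; you need only the single operator identity $\tau^1_{AB}=\sum_\lambda p(\lambda)\,\rho_\lambda\otimes\omega_\lambda$. Expand both sides in Bob's operator basis $\{\mathbb{I},\sigma_x,\sigma_y,\sigma_z\}$: the $\mathbb{I}$ component requires $\sum_\lambda p(\lambda)\rho_\lambda=\rho_A$, which is part of the hypothesis; the $\sigma_j$ component requires $\mu_1\tr_B[(\mathbb{I}\otimes\sigma_j)\rho_{AB}]=\sum_\lambda p(\lambda)(\vec w_\lambda)_j\,\rho_\lambda$, which is exactly the LHS decomposition of Alice's conditional states for Bob's $\sigma_j$ measurement multiplied by $\mu_1$. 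Agreement on this tomographically complete set gives the operator identity, and the statistics for oblique $\hat n$ then agree automatically after the $\lambda$-average, by linearity of both sides in $\hat n$. Once these two steps are written out this way, your proof coincides with the paper's (which performs the same computation in terms of joint probabilities rather than operators); as submitted, though, the technical core is both missing and misdirected, so the proposal does not yet constitute a proof.
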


\begin{proof}
At first we shall prove that if $\tau^1_{AB}$ is entangled, then $\rho_{AB}$ is EPR steerable from Bob to Alice. We shall prove this by proving its converse negative proposition: if $\rho_{AB}$ is not EPR steerable from Bob to Alice, then $\tau^1_{AB}$ is separable.

Let us calculate $P(a, b|A, B, \tau^1_{AB}) = \tr[M^B_{b} \varsigma^A_a]$, which is the joint probability of obtaining the outcomes $a$ and $b$, when arbitrary measurements $A$ and $B$ are performed by Alice and Bob locally on state $\tau^1_{AB}$, respectively, where $ \varsigma^A_a$ is the (unnormalized) conditional state on Bob's side when Alice performs measurement $A$ with measurement operators $M^A_{a}$ corresponding to the outcome $a$.
\begin{align}
\varsigma^A_a &= \tr_A\Big[\big(M^A_{a} \otimes \mathbb{I}\big) \tau^1_{AB}\Big] \nonumber \\
& =  \tr_A\Big[\big(M^A_{a} \otimes \mathbb{I}\big) \big( \mu_1 \rho_{AB} + (1-\mu_1) \tilde{\rho^1}_{AB}\big)\Big] \nonumber \\
& = \mu_1  \tr_A\Big[\big(M^A_{a} \otimes \mathbb{I}\big) \rho_{AB}\Big] + (1-\mu_1) \tr_A\Big[\big(M^A_{a} \otimes \mathbb{I}\big) \tilde{\rho^1}_{AB}\Big] \nonumber \\
& = \mu_1  \tr_A\Big[\big(M^A_{a} \otimes \mathbb{I}\big) \rho_{AB}\Big] + (1-\mu_1) P(a|A, \rho_{AB}) \frac{\mathbb{I}}{2},
\label{eedd}
\end{align}
where $P(a|A, \rho_{AB})$ denotes the marginal probability of Alice to obtain the outcome $a$ contingent upon performing measurement $A$ on the state $\rho_{AB}$.

Let us assume that
\begin{align}
\varsigma^A_a &= \begin{pmatrix}
n_1 && n_2\\
n_3 && n_4\\
\end{pmatrix} \nonumber \\
&= \frac{n_1 + n_4}{2} \mathbb{I} + \text{Re}[n_2] \sigma_x - \text{Im}[n_2] \sigma_y + \frac{n_1 - n_4}{2} \sigma_z,
\label{cons}
\end{align}
where $n_1$, $n_4$ are real and $n_2$ = $\bar{n}_3$ with $\bar{n}_3$ being the complex conjugate of $n_3$ (since $\varsigma^A_a$ is a Hermitian matrix). $\sigma_x$, $\sigma_y$, $\sigma_z$ are Pauli matrices. $\text{Re}[n_2]$ and $\text{Im}[n_2]$ denote the real and imaginary part of $n_2$, respectively.

Now let us evaluate the matrix elements of $\varsigma^A_a$ using Eq. (\ref{eedd}).
\begin{align}
n_1 &= \tr \Big[\pi^{\hat{z}}_{+} \varsigma^A_a\Big] \nonumber \\
& = \mu_1 P(a, +| A, \hat{z}, \rho_{AB}) + (1-\mu_1) P(a|A, \rho_{AB}) \frac{1}{2},
\label{n_1}
\end{align}
where $\pi^{\hat{z}}_{+}$ is the projector onto the eigenstate of $\sigma_z$ corresponding to the eigenvalue $+1$ and it is given by,
\begin{equation}
\pi^{\hat{z}}_{+} = \begin{pmatrix}
1 && 0\\
0 && 0\\
\end{pmatrix}.
\end{equation}
$P(a, +| A, \hat{z}, \rho_{AB})$ is the joint probability of obtaining the outcomes $a$ and $+1$, when measurement $A$ and projective measurement corresponding to the operator $\sigma_z$ are performed by Alice and Bob locally on state $\rho_{AB}$, respectively.

\begin{align}
n_4 &= \tr \Big[\pi^{\hat{z}}_{-} \varsigma^A_a\Big] \nonumber \\
& = \mu_1 P(a, -| A, \hat{z}, \rho_{AB}) + (1-\mu_1) P(a|A, \rho_{AB}) \frac{1}{2},
\label{n_4}
\end{align}
where $\pi^{\hat{z}}_{-}$ is the projector onto the eigenstate of $\sigma_z$ corresponding to the eigenvalue $-1$ and it is given by,
\begin{equation}
\pi^{\hat{z}}_{-} = \begin{pmatrix}
0 && 0\\
0 && 1\\
\end{pmatrix}.
\end{equation}
$P(a, -| A, \hat{z}, \rho_{AB})$ is the joint probability of obtaining the outcomes $a$ and $-1$, when measurement $A$ and projective measurement corresponding to the operator $\sigma_z$ are performed by Alice and Bob locally on state $\rho_{AB}$, respectively.
Hence, from Eqs.(\ref{n_1}) and (\ref{n_4}) we get
\begin{equation}
n_1 + n_4 = P(a|A, \rho_{AB})
\label{plus}
\end{equation}
and 
\begin{equation}
n_1 - n_4 = 2 \mu_1 P(a, +| A, \hat{z}, \rho_{AB}) - \mu_1 P(a|A, \rho_{AB}).
\label{minus}
\end{equation}

\begin{align}
\text{Re}[n_2] &= \tr \Big[\pi^{\hat{x}}_{+} \varsigma^A_a\Big] - \frac{1}{2} P(a|A, \rho_{AB}) \nonumber \\
& = \mu_1 P(a, +| A, \hat{x}, \rho_{AB}) -\frac{\mu_1}{2} P(a|A, \rho_{AB}),
\label{Rn_2}
\end{align}
where $\pi^{\hat{x}}_{+}$ is the projector onto the eigenstate of $\sigma_x$ corresponding to the eigenvalue $+1$ and it is given by,
\begin{equation}
\pi^{\hat{x}}_{+} = \begin{pmatrix}
\dfrac{1}{2} && \dfrac{1}{2}\\[10pt]
\dfrac{1}{2} && \dfrac{1}{2}\\
\end{pmatrix}.
\end{equation}
$P(a, +| A, \hat{x}, \rho_{AB})$ is the joint probability of obtaining the outcomes $a$ and $+1$, when measurement $A$ and projective measurement corresponding to the operator $\sigma_x$ are performed by Alice and Bob locally on state $\rho_{AB}$, respectively.

\begin{align}
\text{Im}[n_2] &= - \tr \Big[\pi^{\hat{y}}_{+} \varsigma^A_a\Big] + \frac{1}{2} P(a|A, \rho_{AB}) \nonumber \\
& = - \mu_1 P(a, +| A, \hat{y}, \rho_{AB}) + \frac{\mu_1}{2} P(a|A, \rho_{AB}),
\label{In_2}
\end{align}
where $\pi^{\hat{y}}_{+}$ is the projector onto the eigenstate of $\sigma_y$ corresponding to the eigenvalue $+1$ and it is given by
\begin{equation}
\pi^{\hat{y}}_{+} = \begin{pmatrix}
\dfrac{1}{2} && \dfrac{-i}{2}\\[10pt]
\dfrac{i}{2} && \dfrac{1}{2}\\
\end{pmatrix}.
\end{equation}
$P(a, +| A, \hat{y}, \rho_{AB})$ is the joint probability of obtaining the outcomes $a$ and $+1$, when measurement $A$ and projective measurement corresponding to the operator $\sigma_y$ are performed by Alice and Bob locally on state $\rho_{AB}$, respectively.

Combining Eqs. (\ref{cons}), (\ref{plus}), (\ref{minus}), (\ref{Rn_2}), (\ref{In_2}) we obtain
\begin{align}
\varsigma^A_a &= P(a|A, \rho_{AB}) \frac{\mathbb{I}}{2} \nonumber \\
&+ \Big(\mu_1 P(a, +| A, \hat{x}, \rho_{AB}) -\frac{\mu_1}{2} P(a|A, \rho_{AB}) \Big) \sigma_x \nonumber \\
&+ \Big(\mu_1 P(a, +| A, \hat{y}, \rho_{AB}) - \frac{\mu_1}{2} P(a|A, \rho_{AB}) \Big) \sigma_y \nonumber \\
&+ \Big(\mu_1 P(a, +| A, \hat{z}, \rho_{AB}) - \frac{\mu_1}{2} P(a|A, \rho_{AB}) \Big) \sigma_z,
\label{cons2}
\end{align}
Therefore, from Eq.(\ref{cons2}) we get
\begin{align}
P(a, b|A, B, \tau^1_{AB}) &= \tr[M^B_{b} \varsigma^A_a] \nonumber \\
&= \tr\Bigg[ M^B_{b} \Big\{P(a|A, \rho_{AB}) \frac{\mathbb{I}}{2} \nonumber \\
&+ \mu_1 P(a, +| A, \hat{x}, \rho_{AB}) \sigma_x -\frac{\mu_1}{2} P(a|A, \rho_{AB}) \sigma_x \nonumber \\
&+ \mu_1 P(a, +| A, \hat{y}, \rho_{AB}) \sigma_y - \frac{\mu_1}{2} P(a|A, \rho_{AB}) \sigma_y \nonumber \\
&+ \mu_1 P(a, +| A, \hat{z}, \rho_{AB}) \sigma_z - \frac{\mu_1}{2} P(a|A, \rho_{AB}) \sigma_z \Big\} \Bigg].
\label{jtpr}
\end{align}
Now, if $\rho_{AB}$ is not steerable from Bob to Alice, then for all  $A \in \mathbb{F}_{\alpha} $, $B \in \mathbb{F}_{\beta}$, $a \in \mathbb{G}_{a}$, $b \in \mathbb{G}_{b}$, the joint probability distribution can be written in the form
\begin{equation}
P(a, b|A, B, \rho_{AB}) = \sum_{\lambda} P(\lambda )P_Q(a|A,\rho^A_{\lambda})P(b|B,\lambda).
\label{lhvlhsf}
\end{equation}
We can thus write down the following,
\begin{equation}
P(a, +|A, \hat{x}, \rho_{AB}) = \sum_{\lambda} P(\lambda )P_Q(a|A,\rho^A_{\lambda})P(+|\hat{x},\lambda),
\label{eq1}
\end{equation}
\begin{equation}
P(a, +|A, \hat{y}, \rho_{AB}) = \sum_{\lambda} P(\lambda )P_Q(a|A,\rho^A_{\lambda})P(+|\hat{y},\lambda),
\label{eq2}
\end{equation}
\begin{equation}
P(a, +|A, \hat{z}, \rho_{AB}) = \sum_{\lambda} P(\lambda )P_Q(a|A,\rho^A_{\lambda})P(+|\hat{z},\lambda),
\label{eq3}
\end{equation}
and 
\begin{equation}
P(a|A, \rho_{AB}) = \sum_{\lambda} P(\lambda )P_Q(a|A,\rho^A_{\lambda}).
\label{eq4}
\end{equation}
Using the above Eqs. (\ref{jtpr}), (\ref{eq1}), (\ref{eq2}), (\ref{eq3}) and (\ref{eq4}) we get
\begin{align}
P(a, b|A, B, \tau^1_{AB}) &= \tr\Bigg[ M^B_{b} \Bigg\{ \Big(\sum_{\lambda} P(\lambda )P_Q(a|A,\rho^A_{\lambda})\Big) \frac{\mathbb{I}}{2} \nonumber \\
&+ \mu_1 \Big(\sum_{\lambda} P(\lambda )P_Q(a|A,\rho^A_{\lambda})P(+|\hat{x},\lambda) \Big) \sigma_x  \nonumber \\
&- \frac{\mu_1}{2} \Big(\sum_{\lambda} P(\lambda )P_Q(a|A,\rho^A_{\lambda})\Big) \sigma_x \nonumber \\
&+ \mu_1 \Big(\sum_{\lambda} P(\lambda )P_Q(a|A,\rho^A_{\lambda})P(+|\hat{y},\lambda) \Big) \sigma_y 
\nonumber \\
&- \frac{\mu_1}{2} \Big(\sum_{\lambda} P(\lambda )P_Q(a|A,\rho^A_{\lambda})\Big) \sigma_y \nonumber \\
&+ \mu_1 \Big(\sum_{\lambda} P(\lambda )P_Q(a|A,\rho^A_{\lambda})P(+|\hat{z},\lambda) \Big) \sigma_z 
\nonumber \\
&- \frac{\mu_1}{2} \Big(\sum_{\lambda} P(\lambda )P_Q(a|A,\rho^A_{\lambda})\Big) \sigma_z \Bigg\} \Bigg].
\label{jtpr2}
\end{align}
Now let us choose 
\begin{equation}
\rho^B_{\lambda} = \dfrac{\mathbb{I} + \vec{\sigma} \cdot \vec{r}_{\lambda}}{2},
\end{equation}
where $\vec{\sigma}$ = $(\sigma_x , \sigma_y, \sigma_z)$ is a vector composed of Pauli matrices and 
\begin{equation}
\vec{r}_{\lambda} = \mu_1 \Big( 2 P(+|\hat{x}, \lambda) - 1, 2 P(+|\hat{y}, \lambda) - 1, 2 P(+|\hat{z}, \lambda) - 1 \Big).
\end{equation}
To be a valid probability distribution, $P(+|\hat{x}, \lambda)$, $P(+|\hat{y}, \lambda)$, $P(+|\hat{z}, \lambda)$ $\in$ $[0, 1]$. It can be easily checked that $|\vec{r}_{\lambda}| \leq 1$ implies $\mu_1 \in \Bigg[0, \dfrac{1}{\sqrt{3}}\Bigg]$. Therefore, $\rho^B_{\lambda}$ is a valid quantum state (qubit) for $\mu_1 \in \Bigg[0, \dfrac{1}{\sqrt{3}}\Bigg]$.

From the above construction one can write down the following for all  $A \in \mathbb{F}_{\alpha} $, $B \in \mathbb{F}_{\beta}$, $a \in \mathbb{G}_{a}$, $b \in \mathbb{G}_{b}$,
\begin{align}
&\sum_{\lambda} P(\lambda )P_Q(a|A,\rho^A_{\lambda})P_Q(b|B,\rho^B_{\lambda}) \nonumber\\
&= \sum_{\lambda} P(\lambda )P_Q(a|A,\rho^A_{\lambda})\tr\Bigg[ M^B_{b} \dfrac{\mathbb{I} + \vec{\sigma} \cdot \vec{r}_{\lambda}}{2} \Bigg] \nonumber\\
&= \sum_{\lambda} \Bigg( P(\lambda ) P_Q(a|A,\rho^A_{\lambda}) \Bigg) \tr \Bigg[M^B_{b} \Big( \frac{\mathbb{I}}{2} + \mu_1 P(+|\hat{x}, \lambda) \sigma_x \nonumber \\
&- \frac{\mu_1}{2} \sigma_x + \mu_1 P(+|\hat{y}, \lambda) \sigma_y - \frac{\mu_1}{2} \sigma_y + \mu_1 P(+|\hat{z}, \lambda) \sigma_z - \frac{\mu_1}{2} \sigma_z \Big) \Bigg].
\label{rhs}
\end{align}
Comparing Eqs. (\ref{jtpr2}) and (\ref{rhs}), we can write
\begin{equation}
P(a, b|A, B, \tau^1_{AB}) = \sum_{\lambda} P(\lambda )P_Q(a|A,\rho^A_{\lambda})P_Q(b|B,\rho^B_{\lambda}).
\label{lhslhsf}
\end{equation}
Hence, we can conclude that if for arbitrary measurement $A \in \mathbb{F}_{\alpha}$ performed by Alice and for arbitrary measurement $B \in \mathbb{F}_{\beta}$ performed by Bob the joint probability distribution obtained from the state $\rho_{AB}$ can be written in the form given by Eq.(\ref{lhvlhsf}),
then the joint probability distribution obtained from the state $\tau^1_{AB}$ can always be written in the form given by Eq.(\ref{lhslhsf}). In other words, if $\rho_{AB}$ is not EPR steerable from Bob to Alice, then $\tau^1_{AB}$ is separable.

In a similar way, it can be shown that if $\rho_{AB}$ is not EPR steerable from Alice to Bob, then $\tau^2_{AB}$ is separable. This completes the proof.
\end{proof}

Now we are going to present a brief outline of the possible experimental implementation of Theorem \ref{th1}. Suppose we want to check experimentally whether an arbitrary given two-qubit state $\rho_{AB}$ (shared between Alice and Bob) demonstrates EPR steering from Bob to Alice using Theorem \ref{th1}. In order to achieve this, at first one of the two qubits (say, Bob's qubit) of the given system with state $\rho_{AB}$ is subjected to a local depolarizing channel. Hence, the state after the channel action is given by,
\begin{align}
\rho^f_{AB} = \sum_{i=0}^{3} (\mathbb{I} \otimes K_i) \  \rho_{AB} \  (\mathbb{I} \otimes K_i^\dagger).
\end{align}
Here $K_i$s are the Kraus operators associated with depolarizing channel with $K_0 = \frac{\sqrt{1+3p}}{2} \mathbb{I}$, $K_1 = \frac{\sqrt{1-p}}{2} \sigma_x$, $K_2 = \frac{\sqrt{1-p}}{2} \sigma_y$, $K_3 = \frac{\sqrt{1-p}}{2} \sigma_z$ and $p$ is the channel strength with $0 \leq p \leq 1$. It can be easily checked that 
\begin{equation}
\rho^f_{AB} = p \ \rho_{AB} + (1-p) \ \tilde{\rho}_{AB},
\label{eqfeb1}
\end{equation}
where $\tilde{\rho}_{AB} = \rho_A \otimes \frac{\mathbb{I}}{2}$ with $\rho_A = \tr_{B}[ \rho_{AB} ]$ being the reduced state of the Alice's qubit. 

According to Theorem \ref{th1}, the given state $\rho_{AB}$ is EPR steerable from Bob to Alice if the state $\rho^f_{AB}$ given by Eq.(\ref{eqfeb1}) (with $p \in [0, \frac{1}{\sqrt{3}}]$)  is entangled. 

One can experimentally construct the state $\rho^f_{AB}$ from the given two-qubit state $\rho_{AB}$ by implementing the depolarizing channel following the proposed techniques (see \cite{dc} and the references therein) and then subjecting Bob's qubit to the local depolarizing channel. Finally, entanglement of the state $\rho^f_{AB}$ can be experimentally tested using entanglement witness (\cite{ew1,ew6} and the references therein). 

Similarly, one can experimentally check EPR steering of the given two-qubit state $\rho_{AB}$ from Alice to Bob by subjecting Alice's qubit to a local depolarizing channel, and then checking entanglement of the constructed state using entanglement witness.

According to the famous Peres-Horodecki criteria \cite{ph1,ph2}, any given two-qubit state is entangled if and only if the partial transpose of the given state has at least one negative eigenvalue. Hence, Theorem \ref{th1} immediately provides the following important observation.
\begin{observation}
If the partial transpose of the state $\tau^1_{AB}$ has at least one negative eigenvalue, then $\rho_{AB}$ is EPR steerable from Bob to Alice. On the other hand, if the partial transpose of the state $\tau^2_{AB}$ has at least one negative eigenvalue, then $\rho_{AB}$ is EPR steerable from Alice to Bob; where $\rho_{AB}$, $\tau^1_{AB}$, $\tau^2_{AB}$ are defined in the statement of Theorem \ref{th1}.
\label{obbb1}
\end{observation}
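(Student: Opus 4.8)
The plan is to derive this statement directly from Theorem \ref{th1} together with the Peres-Horodecki criterion \cite{ph1,ph2}, treating it as a corollary rather than an independent result. First I would recall the precise content of the Peres-Horodecki (positive partial transpose) criterion in the relevant regime: for a two-qubit state, entanglement is \emph{equivalent} to the partial transpose having at least one negative eigenvalue. The next point to verify is that the criterion is applicable to the constructed states. Since $\tau^1_{AB}$ and $\tau^2_{AB}$ are, by construction, convex combinations of the two-qubit state $\rho_{AB}$ with the two-qubit states $\tilde{\rho^1}_{AB} = \rho_A \otimes \frac{\mathbb{I}}{2}$ and $\tilde{\rho^2}_{AB} = \frac{\mathbb{I}}{2} \otimes \rho_B$ respectively, they are themselves legitimate $2 \times 2$ density matrices, so the criterion applies to each of them without modification.

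The core of the argument then reduces to chaining two implications. Suppose the partial transpose of $\tau^1_{AB}$ has at least one negative eigenvalue; by the Peres-Horodecki criterion this is equivalent to $\tau^1_{AB}$ being entangled, and Theorem \ref{th1} then immediately yields that $\rho_{AB}$ is EPR steerable from Bob to Alice. The identical reasoning applied to $\tau^2_{AB}$ establishes steerability from Alice to Bob. In effect, the observation is obtained by substituting the operational hypothesis ``$\tau^i_{AB}$ is entangled'' appearing in Theorem \ref{th1} with the equivalent spectral condition furnished by Peres-Horodecki, so no new quantum-information content is introduced.

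I do not anticipate a genuine obstacle, as the only delicate point is confirming the exact equivalence used above, namely that positivity of the partial transpose coincides with separability precisely for $2 \times 2$ systems. This is what makes the spectral reformulation legitimate here: a negative eigenvalue of the partial transpose both certifies entanglement and is implied by it. I would flag that in higher dimensions the partial-transpose test is only sufficient for entanglement, so a negative eigenvalue would still certify steering via Theorem \ref{th1} while its absence would be inconclusive; but for the two-qubit case stated in the observation the equivalence is exact and each implication follows without further calculation.
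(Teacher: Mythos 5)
Your proposal is correct and follows essentially the same route as the paper: the paper also obtains this observation as an immediate corollary of Theorem \ref{th1} combined with the Peres-Horodecki criterion for two-qubit states. Your additional remark that only the sufficiency direction (negative eigenvalue $\Rightarrow$ entanglement) is actually needed for the stated implication is accurate and consistent with the paper's later use of the criterion.
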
 

However, partial transposition is a non-positive map as transposition is not a completely positive map. 
Hence, partial transposition is a non-physical operation. Therefore, one cannot directly implement Observation \ref{obbb1} in the laboratory. However, Horodecki and Ekert proposed a method called ``structural physical approximation" by which non-physical operations such as partial transposition can be systematically approximated by physical operations \cite{spa1}. Interestingly, structural physical approximations to the non-physical operations (including partial transposition) can be factorized into local operations and classical communication \cite{spa2}. Following the structural physical approximation to partial transposition \cite{spa1}, a two-qubit state $\tau_{AB}$ is entangled if and only if the smallest eigenvalue of the state $\tilde{\tau}_{AB}$ is less than $\frac{2}{9}$, where
\begin{equation}
\tilde{\tau}_{AB} = \Lambda(\tau_{AB}) = \frac{2}{9} \ \mathbb{I} \otimes \mathbb{I} + \frac{1}{9} \ [\mathbb{I} \otimes T] (\tau_{AB}).
\end{equation}
Here $T$ is the transposition operation and partial transpose of $\tau_{AB}$ is denoted by $[\mathbb{I} \otimes T] (\tau_{AB})$. Note that the above map $\Lambda(\tau_{AB}) = \tilde{\tau}_{AB}$ is a completely positive map and therefore physically implementable. 
This map can be implemented by applying selected products of unitary (Pauli) transformations with certain probabilities \cite{spa1}. The experimental demonstration of the structural physical approximation to the partial transpose of two-qubit system has been performed in photonic systems using linear optical devices \cite{spa3}. Hence, one can modify observation \ref{obbb1} for experimental implication in the following way: 
\begin{observation}
If the smallest eigenvalue of the state $\tilde{\tau}^1_{AB}$ = $\frac{2}{9} \ \mathbb{I} \otimes \mathbb{I}$ $+$ $\frac{1}{9} \ [\mathbb{I} \otimes T] (\tau^1_{AB})$ is less than $\frac{2}{9}$, then $\rho_{AB}$ is EPR steerable from Bob to Alice. On the other hand, if the smallest eigenvalue of the state $\tilde{\tau}^2_{AB}$ = $\frac{2}{9} \ \mathbb{I} \otimes \mathbb{I}$ $+$ $\frac{1}{9} \ [\mathbb{I} \otimes T] (\tau^2_{AB})$ is less than $\frac{2}{9}$, then $\rho_{AB}$ is EPR steerable from Alice to Bob; where $\rho_{AB}$, $\tau^1_{AB}$, $\tau^2_{AB}$ are defined in the statement of Theorem \ref{th1}; $[\mathbb{I} \otimes T] (\tau^1_{AB})$ and $[\mathbb{I} \otimes T] (\tau^2_{AB})$ denote the partial transpose of $\tau^1_{AB}$ and $\tau^2_{AB}$  respectively. 
\label{obbb2}
\end{observation}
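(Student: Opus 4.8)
The plan is to derive Observation \ref{obbb2} by chaining the structural physical approximation (SPA) criterion with Theorem \ref{th1} (equivalently, with Observation \ref{obbb1}), so that no new steering analysis is required. Concretely, I would first translate the spectral condition on $\tilde{\tau}^1_{AB}$ into the positive-partial-transpose (PPT) condition on $\tau^1_{AB}$, then invoke the Peres--Horodecki criterion to identify the latter with entanglement of $\tau^1_{AB}$, and finally apply Theorem \ref{th1} to conclude steerability of $\rho_{AB}$ from Bob to Alice. The argument for $\tilde{\tau}^2_{AB}$ and the Alice-to-Bob direction is identical, with the roles of the two qubits exchanged.

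The central step is the spectral translation. Since $\tilde{\tau}^1_{AB} = \frac{2}{9}\,\mathbb{I}\otimes\mathbb{I} + \frac{1}{9}\,[\mathbb{I}\otimes T](\tau^1_{AB})$ is an affine function of the partial transpose with a scalar identity shift, the two operators are simultaneously diagonalizable, and each eigenvalue of $\tilde{\tau}^1_{AB}$ has the form $\frac{2}{9} + \frac{1}{9}\lambda_i$, where $\lambda_i$ runs over the eigenvalues of $[\mathbb{I}\otimes T](\tau^1_{AB})$. Hence the smallest eigenvalue of $\tilde{\tau}^1_{AB}$ is strictly below $\frac{2}{9}$ if and only if the smallest eigenvalue of the partial transpose is strictly negative, i.e.\ if and only if $\tau^1_{AB}$ fails the PPT test. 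By the Peres--Horodecki criterion \cite{ph1,ph2} for two qubits, this is precisely the statement that $\tau^1_{AB}$ is entangled.

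With this equivalence established, the conclusion is immediate: if the smallest eigenvalue of $\tilde{\tau}^1_{AB}$ is less than $\frac{2}{9}$, then $\tau^1_{AB}$ is entangled, and Theorem \ref{th1} then guarantees that $\rho_{AB}$ is EPR steerable from Bob to Alice. Repeating the computation with $\tau^2_{AB}$, whose partial transpose controls the spectrum of $\tilde{\tau}^2_{AB}$ through the same affine relation, yields EPR steerability from Alice to Bob whenever the smallest eigenvalue of $\tilde{\tau}^2_{AB}$ drops below $\frac{2}{9}$.

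I do not anticipate a genuine obstacle in the logical chain itself, since it merely composes results already in hand; the one point requiring care is the internal consistency of the SPA map. Specifically, for the threshold $\frac{2}{9}$ to be meaningful one must verify that $\Lambda$ is a legitimate (completely positive, trace-preserving) channel, so that $\tilde{\tau}^1_{AB}$ is a bona fide state whose smallest eigenvalue is an experimentally accessible quantity. This reduces to checking that the partial transpose of any two-qubit state has all eigenvalues bounded below by $-2$ — in fact they never fall below $-\frac{1}{2}$ — which forces every eigenvalue $\frac{2}{9} + \frac{1}{9}\lambda_i$ to be nonnegative, together with a short trace computation confirming $\tr[\tilde{\tau}^1_{AB}] = 1$. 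Once this physicality is secured, the coefficient $\frac{2}{9}$ is seen to be exactly the value that simultaneously renders $\Lambda$ physical and pins the detection threshold to the sign change of the partial-transpose spectrum, which is what makes the equivalence with entanglement, and therefore with steering, exact.
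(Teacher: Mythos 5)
Your proposal is correct and follows essentially the same route as the paper: the paper also obtains Observation \ref{obbb2} by combining the structural-physical-approximation criterion (that a two-qubit state is entangled iff the smallest eigenvalue of $\frac{2}{9}\,\mathbb{I}\otimes\mathbb{I} + \frac{1}{9}\,[\mathbb{I}\otimes T](\tau_{AB})$ lies below $\frac{2}{9}$, cited from Horodecki--Ekert) with Theorem \ref{th1}. The only difference is that you unpack the cited SPA equivalence via the affine spectral relation and the Peres--Horodecki criterion, which is a harmless (and correct) elaboration of the same argument.
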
 
Hence, in order to experimentally check EPR steering of a given two-qubit state $\rho_{AB}$, one has to experimentally prepare the states $\tau^1_{AB}$ and $\tau^2_{AB}$ from the given state $\rho_{AB}$ by applying local depolarizing channel as described earlier. Then the states $\tilde{\tau}^1_{AB}$ and $\tilde{\tau}^2_{AB}$ described above can be experimentally prepared from $\tau^1_{AB}$ and $\tau^2_{AB}$, respectively, using local measurements and classical communication following the technique adopted in \cite{spa1,spa3}. Finally, the minimum eigenvalue of the states $\tilde{\tau}^1_{AB}$ and $\tilde{\tau}^2_{AB}$ can be estimated by defining a classical optimization problem over the measurement outcomes \cite{spa3} to determine whether the states $\tilde{\tau}^1_{AB}$ and $\tilde{\tau}^2_{AB}$ are entangled.

It is to be noted that the criteria stated in Theorem \ref{th1} to detect EPR steering through entanglement detection is applicable when the shared state belongs to the Hilbert space $\mathcal{H}^2 \otimes \mathcal{H}^2$. However, Theorem \ref{th1} can be generalized to the case when the shared state belongs to the Hilbert space $\mathcal{H}^d \otimes \mathcal{H}^2$, where $d$ is arbitrary. We have proved in Theorem \ref{th1} that if $\rho_{AB}$ is not EPR steerable from Bob to Alice, then $\tau^1_{AB}$ is separable. However, the proof of Theorem \ref{th1} remains unchanged if the dimension of the system at Alice's end is $d$. Note that $M_a^{A}$ stated in the proof of Theorem \ref{th1} is the measurement operator corresponding to the outcome $a$ when Alice performs measurement $A$ on her ``qubit". The whole mathematical proof of Theorem \ref{th1} remains unchanged if the above measurement operator is assumed to be acted on Alice's ``qudit". Thus the proof is independent of the dimension of the system at Alice's end. Hence, we can state the following observation.
\begin{observation}
For any qudit-qubit state $\rho_{AB}$ shared between Alice and Bob, define another new qudit-qubit state $\tau^1_{AB}$ given by
\begin{equation}
\tau^1_{AB} = \mu_1 \ \rho_{AB} + (1-\mu_1) \ \tilde{\rho^1}_{AB},
\end{equation}
where $\tilde{\rho^1}_{AB} = \rho_A \otimes \frac{\mathbb{I}}{2}$ with $\rho_A = \tr_{B}[ \rho_{AB} ] = \tr_{B}[ \tau^1_{AB} ]$ being the reduced state (qudit) at Alice's side; $\mu_1 \in [0, \frac{1}{\sqrt{3}}]$. If $\tau^1_{AB}$ is entangled, then $\rho_{AB}$ is EPR steerable from Bob to Alice. On the other hand, for any qubit-qudit state $\rho_{AB}$ shared between Alice and Bob, define another new qubit-qudit state $\tau^2_{AB}$ given by
\begin{equation}
\tau^2_{AB} = \mu_2 \ \rho_{AB} + (1-\mu_2) \ \tilde{\rho^2}_{AB},
\end{equation}
where $\tilde{\rho^2}_{AB} = \frac{\mathbb{I}}{2} \otimes \rho_B$ with $\rho_B = \tr_{A}[ \rho_{AB} ] = \tr_{A}[ \tau^2_{AB} ]$ being the reduced state (qudit) at Bob's side; $\mu_2 \in [0, \frac{1}{\sqrt{3}}]$. If $\tau^2_{AB}$ is entangled, then $\rho_{AB}$ is EPR steerable from Alice to Bob.
\label{ob2}
\end{observation}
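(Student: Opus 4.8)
The plan is to reproduce the proof of Theorem~\ref{th1} essentially verbatim, the crucial point being that nowhere in that argument is the dimension of the \emph{partner} subsystem actually used: what matters is only that the subsystem whose conditional state gets expanded in Pauli operators is a genuine qubit. I would treat the two assertions separately but symmetrically, in each case arguing by contraposition exactly as in Theorem~\ref{th1}. For the first assertion (qudit-qubit, steering from Bob to Alice), I would assume that $\rho_{AB}$ is not EPR steerable from Bob to Alice and show that $\tau^1_{AB}$ is separable; for the second (qubit-qudit, steering from Alice to Bob), I would assume $\rho_{AB}$ is not steerable from Alice to Bob and show $\tau^2_{AB}$ is separable.

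The key steps mirror those of Theorem~\ref{th1} line by line. For the first assertion I would first compute Bob's unnormalized conditional state $\varsigma^A_a=\tr_A[(M^A_a\otimes\mathbb{I})\,\tau^1_{AB}]$ after Alice's measurement. Although $M^A_a$ now acts on a $d$-dimensional space, the partial trace over Alice leaves a $2\times 2$ Hermitian operator on Bob's qubit, so the Pauli decomposition~(\ref{cons}) and the component evaluations in Eqs.~(\ref{n_1})--(\ref{In_2}) carry over without change, yielding Eq.~(\ref{cons2}). Next I would invoke the no-steering hypothesis in the form~(\ref{lhvlhsf}), in which Alice's marginal is quantum, $P_Q(a|A,\rho^A_\lambda)=\tr[\rho^A_\lambda M^A_a]$, now with $\rho^A_\lambda$ a \emph{qudit} state; since this is a scalar, Eqs.~(\ref{eq1})--(\ref{eq4}) and hence~(\ref{jtpr2}) are untouched. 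Finally I would introduce the identical qubit ansatz $\rho^B_\lambda=(\mathbb{I}+\vec{\sigma}\cdot\vec{r}_\lambda)/2$ with the same Bloch vector $\vec{r}_\lambda$; comparing~(\ref{jtpr2}) with~(\ref{rhs}) then exhibits the separable (LHS-LHS) decomposition~(\ref{lhslhsf}) of $\tau^1_{AB}$. For the second assertion I would run the symmetric argument with the roles of the parties interchanged: now Alice is the qubit, so I compute Alice's conditional state $\varsigma^B_b$, expand it in the Pauli basis, and reconstruct Alice's side by a qubit LHS state $\rho^A_\lambda$, while Bob's quantum marginal $P_Q(b|B,\rho^B_\lambda)=\tr[\rho^B_\lambda M^B_b]$ involves the qudit state $\rho^B_\lambda$ only as a scalar.

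The main thing to get right, rather than any genuine obstacle, is to isolate precisely where dimensionality could intrude and confirm that it does not. The only structural assumption actually used is that the subsystem whose conditional state is expanded in Pauli matrices, and whose maximally-mixed component appears in $\tilde{\rho^1}_{AB}$ (resp.\ $\tilde{\rho^2}_{AB}$), is a genuine qubit; the partner (qudit) subsystem enters the relevant no-steering model exclusively through scalar quantum marginals of the form $\tr[\rho_\lambda M]$, which are insensitive to $d$. I would therefore emphasize in the write-up that the bound $\mu_1,\mu_2\in[0,1/\sqrt{3}]$, being a constraint on the qubit Bloch vector $\vec{r}_\lambda$ alone, is unchanged, so that the constructed qubit states $\rho^B_\lambda$ (resp.\ $\rho^A_\lambda$) remain valid in arbitrary partner dimension $d$. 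Establishing this dimension-independence of each step is all that is required to upgrade Theorem~\ref{th1} to Observation~\ref{ob2}.
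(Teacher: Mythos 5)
Your proposal is correct and coincides with the paper's own justification: the paper establishes Observation~\ref{ob2} precisely by remarking that the proof of Theorem~\ref{th1} never uses the dimension of the steering party's system---the measurement operators $M^A_a$ (resp.\ $M^B_b$) and the LHS states on that side enter only through scalar traces---while the Pauli expansion and the Bloch-vector bound $\mu\le 1/\sqrt{3}$ concern only the qubit side. Your write-up simply makes this dimension-independence explicit step by step, which is exactly the intended argument.
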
 
Note that Observation \ref{ob2} enables to detect EPR steering from Bob to Alice of a qudit-qubit state shared between Alice and Bob, but it does not enable to detect EPR steering from Alice to Bob of the qudit-qubit state shared between Alice and Bob. Similarly, using Observation \ref{ob2} one can detect EPR steering from Alice to Bob of a qubit-qudit state shared between Alice and Bob, but can not detect EPR steering from Bob to Alice of the qubit-qudit state shared between Alice and Bob. 

One important point to be stressed here is that the proof of Theorem \ref{th1} is based on the fact that the unnormalized conditional state on Bob's side $\varsigma^A_a$ is a $2 \times 2$ matrix mentioned in Eq.(\ref{cons}). Hence, this proof is only applicable when Bob's system is a qubit. That is why the mathematical procedure described in the proof of Theorem \ref{th1} cannot be generalized to detect EPR steering of shared state belonging to the Hilbert space $\mathcal{H}^{d_1} \otimes \mathcal{H}^{d_2}$ (where, $d_1$ and $d_2$ are arbitrary). Further research is needed to investigate whether EPR steering of an arbitrary dimensional state can be detected through entanglement detection using different procedure.

\section{Illustration with examples}\label{sec4}

Any two-qubit pure entangled state is EPR-steerable \cite{in13}. The detection of entanglement of any given two-qubit pure state thus certifies the EPR steerability of the given state. However, this is not true for arbitrary mixed two-qubit states, i. e., there exists two-qubit mixed entangled states, which are unsteerable. Hence, for arbitrary two-qubit mixed states EPR steering cannot be detected by detecting entanglement of that state. The novelty of Theorem \ref{th1} is that it enables to detect EPR steering of any arbitrary two-qubit state (pure as well as mixed) by detecting entanglement of another constructed two-qubit state. In the following we will detect EPR steering of different families of two-qubit mixed states using our proposed Theorem \ref{th1}. These families of states are chosen for various reasons. Some families are chosen as their experimental preparations have been reported. Some other families of states are chosen in order to compare Theorem \ref{th1} with the results obtained by using Semi-definite Program (SDP).  Finally, we will use Observation \ref{ob2} to detect EPR steering of a family of qubit-qutrit mixed states.

$\bullet$ Consider that the following two-qubit Werner state is shared between Alice and Bob.
\begin{equation}
\rho_{AB} = p \ | \psi \rangle \langle \psi | + (1-p) \ \dfrac{\mathbb{I}}{2} \otimes \dfrac{\mathbb{I}}{2},
\label{werner}
\end{equation}
where $| \psi \rangle = \frac{1}{\sqrt{2}} \ (|01\rangle - |10 \rangle)$ is the singlet state, $\{|0\rangle$, $|1\rangle \}$ being the orthonormal basis in $\mathbb{C}^2$, $0 \leq p \leq 1$. Experimental preparation of Werner state via spontaneous parametric down-conversion and controlled decoherence of photons was demonstrated in \cite{expw}. We want to test in which range of $p$ the state $\rho_{AB}$ given by Eq.(\ref{werner}) is steerable from Bob to Alice or that from Alice to Bob using Theorem \ref{th1}. 

Following Theorem \ref{th1}, from the given two-qubit state $\rho_{AB}$ (\ref{werner}) we construct the new two-qubit state given by
\begin{equation}
\tau^1_{AB} = \mu_1 \ \rho_{AB} + (1-\mu_1) \ \tilde{\rho^1}_{AB},
\label{new1}
\end{equation}
where $\tilde{\rho^1}_{AB} = \rho_A \otimes \frac{\mathbb{I}}{2}$ with $\rho_A = \tr_{B}[ \rho_{AB} ] = \frac{\mathbb{I}}{2}$ being the reduced state at Alice's side. We choose $\mu_1 = \frac{1}{\sqrt{3}}$.
The newly constructed state $\tau^1_{AB}$ given by Eq. (\ref{new1}) is entangled for $p > \frac{1}{\sqrt{3}}$ which can be checked using Peres-Horodecki criteria \cite{ph1,ph2}.  Hence, Theorem \ref{th1} concludes that the given two-qubit Werner state $\rho_{AB}$ (\ref{werner}) is steerable from Bob to Alice for $p > \frac{1}{\sqrt{3}}$.

In a similar way, using Theorem \ref{th1} one can show that the given two-qubit Werner state $\rho_{AB}$ (\ref{werner}) is steerable from Alice to Bob for $p > \frac{1}{\sqrt{3}}$.

Note that using quantum violation of $2$-settings linear steering inequality proposed in \cite{in4} (we have used the particular form of this inequality mentioned in \cite{in17}), Werner state (\ref{werner}) is steerable (from Alice to Bob and from Bob to Alice) for $p > \frac{1}{\sqrt{2}}$. On the other hand, Werner state is both-way steerable for $p > \frac{1}{\sqrt{3}}$ using $3$-settings linear steering inequality \cite{in4,in17}. Hence, in this case our proposed technique to check EPR steerability through entanglement detection provides advantage with respect to $2$-settings linear steering inequality. However,  our proposed technique and $3$-settings linear steering inequality detect steerability of Werner state in the same region. Unsteerability of Werner state has been demonstrated by constructing LHS model using SDP in the region $p \leq \frac{1}{2}$ \cite{nin18}.

One important point to be stressed here is that the converse of Theorem \ref{th1} is not always true, i. e., if $\tau^1_{AB}$ is separable, then $\rho_{AB}$ may or may not be EPR steerable from Bob to Alice. On the other hand, if $\tau^2_{AB}$ is separable, then $\rho_{AB}$ may or may not be EPR steerable from Alice to Bob. From the above example it can be checked using Peres-Horodecki criteria \cite{ph1,ph2} that the newly constructed state $\tau^1_{AB}$ (\ref{new1}) (with $\mu_1 = \frac{1}{\sqrt{3}}$) is separable for $p \leq \frac{1}{\sqrt{3}}$. However, the given two-qubit Werner state $\rho_{AB}$ (\ref{werner}) is steerable for $p > \frac{1}{2}$ \cite{steer,nin18}. Hence, in the region $\frac{1}{2} < p \leq \frac{1}{\sqrt{3}}$ the newly constructed two-qubit state $\tau^1_{AB}$ (\ref{new1}) is separable, but the given two-qubit state $\rho_{AB}$ (\ref{werner}) is steerable.

$\bullet$ We will now check EPR steerability of a class of maximally entangled mixed state (MEMS) proposed by Munro \textit{et al.}  \cite{munro}.  MEMS are those states that achieve the greatest possible entanglement for a given mixedness. The MEMS proposed by Munro \textit{et al.} is given by
\begin{align}
\rho_{munro} \ &= \ \begin{pmatrix}
h(C) && 0 && 0 && \dfrac{C}{2}\\[10pt]
0 && 1- 2 h(C) && 0 && 0\\[10pt]
0 && 0 && 0 && 0\\[10pt]
\dfrac{C}{2} && 0 && 0 && h(C)\\
\end{pmatrix},
\label{munro}
\end{align}
where 
\begin{align}
&h(C) \ = \ 
\begin{dcases}
    \dfrac{1}{3},& \text{if } C < \dfrac{2}{3}  \\
    \dfrac{C}{2},& \text{if  } C \geq \dfrac{2}{3}
\end{dcases},
\label{sum}
\end{align}
with $C$ denoting the concurrence of the state $\rho_{munro}$ (\ref{munro}). Experimental technique to prepare this state using correlated photons from parametric down-conversion has been presented \cite{expmunro}.

Munro state $\rho_{munro}$ (\ref{munro}) demonstrates both-way EPR steerability for $C > 0.531$ following Theorem \ref{th1}. On the other hand, Munro state demonstrates both-way steering for $C > 0.707$ and for $C > 0.667$ using quantum violations of $2$-settings linear steering inequality and $3$-settings linear steering inequality, respectively. Therefore, in the region $0.531 < C \leq 0.667$, steerability of Munro state $\rho_{munro}$ (\ref{munro}) can be detected using Theorem \ref{th1}, but not using $2$-settings linear steering inequality and $3$-settings linear steering inequality.

$\bullet$ We now focus on a class of non-maximally entangled mixed states (NMEMS). The states, which are not MEMS, are called NMEMS. In particular, we investigate EPR steering of the Werner derivative states \cite{hiro} which can be obtained by applying a nonlocal unitary operator on the Werner state. Werner derivative state is given by
\begin{equation}
\rho_{wd} = \alpha \ | \psi_{\theta} \rangle \langle \psi_{\theta} | + (1 - \alpha) \ \dfrac{\mathbb{I}}{2} \otimes \dfrac{\mathbb{I}}{2},
\label{wd}
\end{equation}
where $| \psi_{\theta} \rangle$ = $ \cos \theta \ | 00 \rangle + \sin \theta \ | 11 \rangle$ with $0 \leq \theta \leq \frac{\pi}{4}$ and $0 \leq \alpha \leq 1$. The state $\rho_{wd}$ is entangled for $\alpha > [1 + 2 \sin (2\theta)]^{-1}$. In Fig. \ref{fig1} we have shown the region of $\alpha$ and $\theta$ for which both-way EPR steering of Werner derivative state (\ref{wd}) is detected using Theorem \ref{th1}, quantum violation of $2$-settings linear steering inequality and quantum violation of $3$-settings linear steering inequality, respectively. From this Figure it is clear that Theorem \ref{th1} detects steerability of Werner derivative state for a larger region of $\alpha$ and $\theta$ than $2$-settings linear steering inequality and $3$-settings linear steering inequality. The unsteerability of the state $\rho_{wd}$ given by Eq. (\ref{wd}) was analyzed using SDP in Ref. \cite{nin18}. The range of the state parameters $\alpha$ and $\theta$ where Werner derivative state (\ref{wd}) demonstrates EPR steering was also presented in \cite{nin18} using 9 projective measurements and the SDP method of Refs. \cite{nin10,in13}. By comparing the result presented in \cite{nin18} with Fig. \ref{fig1}, it can be checked that SDP technique adapted in \cite{nin18,nin10,in13} demonstrates EPR steering of Werner derivative state (\ref{wd}) for a larger region of state parameters compared to the technique presented in Theorem \ref{th1}. For example, from Fig. \ref{fig1} it is evident that Theorem \ref{th1} cannot detect EPR steering of the state $\rho_{wd}$ with $\alpha = 0.55$ for any values of $\theta$. However, SDP technique adapted in \cite{nin18,nin10,in13} can detect EPR steering of the state $\rho_{wd}$ with $\alpha = 0.55$ for some values of $\theta$.

 \begin{figure}[t!]
 \centering
\includegraphics[width=8.5cm,height=8cm]{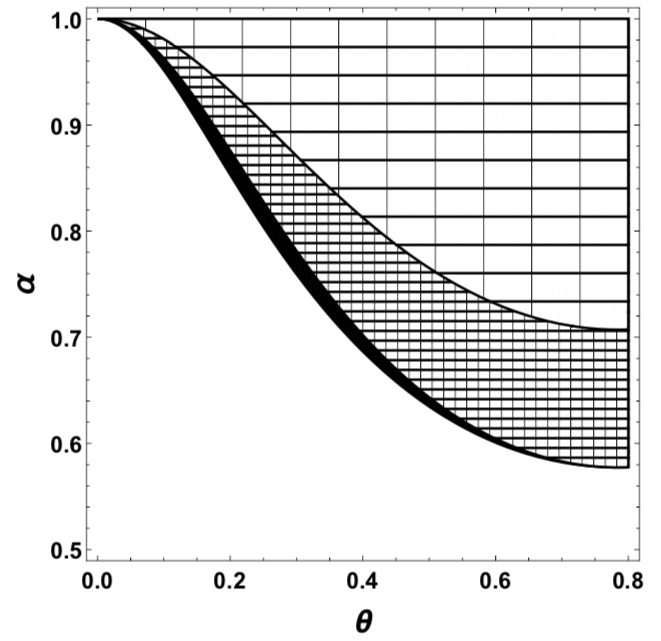}
\caption{Black region is denoted by `Region 1'. The region with small square is denoted by `Region 2'. The region with large square is denoted by `Region 3'. Theorem \ref{th1} detects EPR steerability of the Werner derivative state given by Eq. (\ref{wd}) for the values of state parameters $\alpha$ and $\theta$ indicated by Region 1 or Region 2 or Region 3. Quantum violation of $3$-settings linear steering inequality \cite{in4,in17} detects EPR steerability of the Werner derivative state (\ref{wd}) for the values of $\alpha$ and $\theta$ indicated by Region 2 or Region 3, but not by Region 1. Quantum violation of $2$-settings linear steering inequality \cite{in4,in17} detects EPR steerability of the Werner derivative state (\ref{wd}) for the values of $\alpha$ and $\theta$ indicated by Region 3, but not by Region 1 and Region 2.}\label{fig1}
\end{figure}

$\bullet$ Let us consider another class of NMEMS proposed in \cite{adhikary} given by
\begin{align}
\rho_{p} \ = \ \begin{pmatrix}
\dfrac{p+2}{6} && 0 && 0 && 0 \\[10pt]
0 && \dfrac{1-p}{3} && \dfrac{1-p}{3} && 0 \\[10pt]
0 && \dfrac{1-p}{3} && \dfrac{1-p}{3} && 0 \\[10pt]
0 && 0 && 0 && \dfrac{p}{2} \\
\end{pmatrix},
\label{newnmems}
\end{align}
where $0\leq p \leq 1$. Note that the above state can be written as
\begin{equation}
\rho_{p} = p \ \rho^G_{AB} + (1-p) \ \rho^W_{AB},
\end{equation}
where $\rho^G_{AB} = \tr_{C} [ \ |GHZ \rangle \langle GHZ| \ ]$ and $\rho^W_{AB} = \tr_{C} [|W \rangle \langle W|]$. $|GHZ \rangle$ and $|W \rangle$ are the GHZ (Greenberger-Horne-Zeilinger) state and W state given by,
\begin{equation}
|GHZ \rangle = \frac{1}{\sqrt{2}} \ (|000\rangle + |111 \rangle),
\end{equation}
and 
\begin{equation}
|W \rangle = \frac{1}{\sqrt{3}} \ (|001\rangle + |010 \rangle + |100 \rangle).
\end{equation}
The state (\ref{newnmems}) has experimental relevance since it has been constructed from two qubit GHZ and W states, and GHZ as well as W states are both experimentally realizable in photonic systems using spontaneous parametric down-conversion \cite{expnmems1,expnmems2}. 

Theorem \ref{th1} detects both-way EPR steerability of $\rho_{p}$ given by Eq. (\ref{newnmems}) for $p < 0.073$. On the other hand, EPR steerability of the above state is not detected using $2$-settings linear steering inequality and $3$-settings linear steering inequality.

$\bullet$ Now we will study EPR steering of the following class of maximally steerable mixed state (MSMS) (the states that violate to the most degree a steering inequality for a given mixedness) proposed by Ren \textit{et al.} \cite{ren},
\begin{align}
\rho_{\tau} \ = \ \begin{pmatrix}
\dfrac{1- \tau}{4} && 0 && 0 && \dfrac{1- \tau}{4} \\[10pt]
0 && \dfrac{1+ \tau}{4} && \dfrac{1+ \tau}{4} && 0 \\[10pt]
0 && \dfrac{1+ \tau}{4} && \dfrac{1+ \tau}{4} && 0 \\[10pt]
\dfrac{1- \tau}{4} && 0 && 0 && \dfrac{1 - \tau}{4} \\
\end{pmatrix},
\label{msms}
\end{align}
where $-1 \leq \tau \leq 1$. Note that the above state can be written in the following form
\begin{equation}
\rho_{\tau} = \frac{1-\tau}{2} \ \rho_1 + \frac{1+\tau}{2} \ \rho_2,
\end{equation}
where $\rho_i = |\psi_i \rangle \langle \psi_i|$ with $|\psi_1 \rangle$ = $\frac{1}{\sqrt{2}} \ (|00 \rangle + |11 \rangle)$ and $|\psi_2 \rangle$ = $\frac{1}{\sqrt{2}} \ (|01 \rangle + |10 \rangle)$. Hence, the state $\rho_{\tau}$ can be prepared as a probabilistic mixture of two Bell states.

$2$-settings linear steering inequality as well as $3$-settings linear steering inequality detect both-way steerability of the state $\rho_{\tau}$ for $-1 \leq \tau \leq 1$. Theorem \ref{th1} proposed in this study detects both-way steerability of the state $\rho_{\tau}$ for $-1 \leq \tau <  -0.366$ and for $0.366 < \tau \leq 1$. Hence, in this case $2$-settings linear steering inequality and $3$-settings linear steering inequality are more useful in detecting EPR steerability than our proposed theorem.

$\bullet$ Let us consider the following class of state, which demonstrate one-way steerability \cite{one_way_steer}, is shared between Alice and Bob
\begin{align}
\rho_{\alpha} = \alpha \ | \psi \rangle \langle \psi | + \dfrac{1- \alpha}{5} \ \Big( 2 \ |0\rangle \langle 0| \otimes \dfrac{\mathbb{I}}{2} + 3 \ \dfrac{\mathbb{I}}{2} \otimes |1\rangle \langle 1| \Big),
\label{ows}
\end{align}
where $| \psi \rangle = \frac{1}{\sqrt{2}} \ (|01\rangle - |10 \rangle)$ and $0 \leq \alpha \leq 1$. $2$-settings linear steering inequality detects both-way EPR steerability of the above state $\rho_{\alpha}$ for $\alpha > \frac{1}{\sqrt{2}}$ = $0.707$. $3$-settings linear steering inequality detects both-way EPR steerability of the above state for $\alpha > \frac{1}{\sqrt{3}}$ = $0.577$. Our proposed Theorem \ref{th1} detects EPR steerability from Bob to Alice of the above state $\rho_{\alpha}$ for $\alpha > 0.577$. Interestingly, Theorem \ref{th1} detects EPR steerability from Alice to Bob of the state $\rho_{\alpha}$ (\ref{ows}) for $\alpha > 0.566$. It is to be noted that the state $\rho_{\alpha}$ (\ref{ows}) demonstrates one-way steering in the range $0.4983 \lesssim \alpha \leq \frac{1}{2}$ \cite{one_way_steer}. In particular, it has been shown that the state $\rho_{\alpha}$ (\ref{ows}) does not demonstrate EPR steering from Bob to Alice for $\alpha \leq \frac{1}{2}$. On the other hand, using SDP it has been shown that the above state is steerable from Alice to Bob for $\alpha \gtrsim 0.4983$. Hence, SDP used in Ref. \cite{one_way_steer} is more efficient than our proposed Theorem \ref{th1} in detecting EPR steering of the state $\rho_{\alpha}$ from Alice to Bob.

$\bullet$ Consider that an initial maximally entangled two-qubit state $|\psi_i \rangle$ $=$ $\frac{1}{\sqrt{2}} \ (|00\rangle + |11\rangle)$, shared between Alice and Bob, is subjected to independent local amplitude damping channel given by the evolution $\rho_f$ = $\sum_{i=1}^2 \sum_{j=1}^{2} E_i \otimes E_j \ \rho_i \ E_i^{\dagger} \otimes E_j^{\dagger}$. Here $\rho_i$ = $|\psi_i \rangle \langle \psi_i |$ is the density matrix of the initially shared state, $\rho_f$ is the density matrix of the final state. The two Kraus operators are defined as $E_0$ = $ |0 \rangle \langle 0| + \sqrt{1-p} \ |1 \rangle \langle 1|$ and $E_1$ = $  \sqrt{p} \ |0 \rangle \langle 1|$ with $0 \leq p \leq 1$. Note that experimental technique has already been realized to engineer amplitude damping channel \cite{adc}. $2$-settings linear steering inequality detects both-way EPR steerability of the final state for $0 \leq p < 0.293$. $3$-settings linear steering inequality detects both-way EPR steerability of the final state for $0 \leq p < 0.397$. Our proposed Theorem \ref{th1} detects both-way EPR steerability of the above final state  for $0 \leq p < 0.411$. Hence, in this case $2$-settings linear steering inequality and $3$-settings linear steering inequality are less useful in detecting EPR steerability than our proposed theorem. Importantly, it has been shown \cite{nin17} using SDP that the above final state has LHS model for $p \gtrsim 0.4$. Hence, the approximate lower bound of $p$ where the final state has LHS model obtained using SDP in Ref. \cite{nin17} almost matches with the upper bound of $p$ where our proposed theorem detects EPR steering of the aforementioned final state.

$\bullet$ Let us focus on checking EPR-steerability of some higher dimensional quantum system. In particular, let us consider the following qubit-qutrit state shared between Alice and Bob,
\begin{equation}
\rho_\mu = (1-\mu) \rho_w + \mu \frac{\mathbb{I}}{2} \otimes |v \rangle \langle v |,
\label{newhigh}
\end{equation}
where $0 \leq \mu \leq 1$; $\rho_w$ = $p \ | \psi \rangle \langle \psi | + (1-p) \ \dfrac{\mathbb{I}}{2} \otimes \dfrac{\mathbb{I}}{2}$ is the Werner state mentioned in Eq.(\ref{werner}) with $0 \leq p \leq 1$; $| \psi \rangle = \frac{1}{\sqrt{2}} \ (|01\rangle - |10 \rangle)$ is the singlet state. $|v\rangle$ is a vacuum state orthogonal to Bob's qubit subspace. The above state (\ref{newhigh}) can be prepared when one subsystem of Werner state (\ref{werner}) is distributed through a lossy channel to Bob. Note that a lossy channel is one that replaces a qubit with the vacuum state $|v\rangle$ with probability $\mu$ and can be represented by the map $\rho$ $\rightarrow$ $(1-\mu) \rho + \mu |v \rangle \langle v |$. We will now check EPR steerability of the qubit-qutrit state (\ref{newhigh}) following Observation \ref{ob2}. Hence, we construct the following new qubit-qutrit state from the state $\rho_{\mu}$ mentioned in Eq. (\ref{newhigh}),
\begin{equation}
\tau = \frac{1}{\sqrt{3}} \rho_\mu + \Big(1-\frac{1}{\sqrt{3}}\Big) \tilde{\rho_\mu},
\label{newh2}
\end{equation}
where $\tilde{\rho_\mu}$ = $\frac{\mathbb{I}}{2} \otimes \rho_B$ with $\rho_B$ = $\tr_A[\rho_\mu]$ being the reduced state (qutrit) of $\rho_{\mu}$ at Bob's state. The qubit-qutrit state (\ref{newh2}) is entangled for $\frac{1}{\sqrt{3}} < p \leq 1$ for any values of $\mu$. This can be checked through Peres-Horodecki criteria \cite{ph1,ph2}. Hence, following Observation \ref{ob2} we can conclude that the qubit-qutrit state mentioned in Eq. (\ref{newhigh}) is EPR steerable from Alice to Bob for $\frac{1}{\sqrt{3}} < p \leq 1$ for any values of $\mu$.

It has been shown \cite{newpap1,newpap2} that the qubit-qutrit state (\ref{newhigh}) is EPR steerable from Alice to Bob for $\frac{1}{2} < p \leq 1$ for any values of $\mu$. Hence, the technique adopted in Refs. \cite{newpap1,newpap2} is more useful than Observation \ref{ob2} in detecting EPR steerability of the qubit-qutrit state (\ref{newhigh}) from Alice to Bob. One important point to be stressed here is that the state (\ref{newhigh}) is one-way EPR steerable since for $\frac{1}{2} < p \leq 1$ the state (\ref{newhigh}) is not EPR steerable from Bob to Alice for certain values of $\mu$ considering projective measurements or positive operator valued measurements (POVM) \cite{newpap1,newpap2}. On the other hand, our proposed Observation \ref{ob2} cannot detect EPR steering of the qubit-qutrit state (\ref{newhigh}) from Bob to Alice. Hence, we cannot use Observation \ref{ob2} to demonstrate one-way EPR steering of the qubit-qutrit state (\ref{newhigh}).

As mentioned earlier, the experimental preparations of some of the aforementioned families of two-qubit states have been reported. Hence, after preparing these states following the proposed techniques one can use local depolarizing channel \cite{dc} and structural physical approximation to partial transposition \cite{spa3} as mentioned in Section \ref{sec3} in order to experimentally detect EPR steering of these families of states using our proposed Observation \ref{obbb2}.

\section{Concluding discussions}\label{sec5}

Since EPR steering has foundational significance as well as information theoretic applications, detecting EPR steering is one of the most profound problem in recent times. A number of criteria to detect EPR steering has been proposed till date \cite{in1,in4,in5,in6,in10,in8,nin18,nin10,in13,in17,nin17}. In the present study we have provided a novel criteria to detect EPR steering of an arbitrary two-qubit state. This criteria enables one to detect EPR steering of the given two-qubit state by detecting entanglement of another constructed two-qubit state. Hence, theoretically one can detect EPR steering of the given state using Peres-Horodecki criteria without using any steering inequality. Besides having foundational importance, our proposed technique to detect EPR steering through entanglement detection reduces the ``complexity cost"  in experimentally determining EPR steering as the ``complexity cost" for the least complex demonstration of entanglement is less than the ``complexity cost" for the least complex demonstration of EPR steering \cite{cc}. Moreover, this study may pave a new way in avoiding locality loophole in detecting EPR steering experimentally.

It is to be noted that any quantum state, which is EPR steerable, is entangled as well, since quantum states demonstrating EPR steering  form a strict subset of the entangled states \cite{inequi}. Hence, entanglement can be detected by detecting EPR steering. Although, the converse is not true always as there exist entangled unsteerable states \cite{inequi}. For this reason, in general, detecting entanglement in a quantum state does not guarantee that the state under consideration is EPR steerable. However, the novelty of the present study is that it provides an indirect way to detect EPR steering though entanglement detection. Previously, it was shown that Bell nonlocality can be indirectly detected by detecting EPR steering \cite{pati}. Hence, the present study together with the results obtained in \cite{pati} connect three inequivalent forms of quantum inseparabilities.

The present study is restricted to only qubit-qubit, qubit-qudit and qudit-qubit systems. Whether EPR steering can be detected through entanglement detection in arbitrary higher dimensional system (qudit-qudit) as well as in multipartite scenario is worth to be studied in future. In the present study we have discussed some possible outlines to experimentally detect EPR steering through our proposed criteria using local depolarizing channel together with entanglement witness (see, for example, \cite{ew1,ew6} and the references therein) or ``structural physical approximation" to the partial transpose operation \cite{spa1,spa3}. Hence, the experimental realization of our proposed criteria following the above ideas is another area for future research. 

\section*{Acknowledgements}
The authors acknowledge the anonymous referees for valuable comments. D. D. and S. S. acknowledge fruitful discussions with Arkaprabha Ghosal, Som Kanjilal and Arup Roy. D. D. acknowledges the financial support from University Grants Commission (UGC), Government of India. S. S. acknowledges the financial support from INSPIRE programme, Department of Science and Technology, Government of India.\\

\textit{Note added-} After communicating this manuscript to the journal, a related work by Changbo Chen \textit{et al.} \cite{new} appeared in arXiv.

\end{document}